\newcommand{\gf}{{\mathbb {F}}}
\newcommand{\fr}{\gf_r}
\newcommand{\ftwom}{\gf_{2^m}}
\newcommand{\tr}{\mathrm {Tr}}
\newcommand{\ubar}{\overline{u}}
\newcommand{\exponent}{{\rm exp}}
\title{Permutation trinomials over finite fields with even characteristic\thanks{
The research of C. Ding was supported by the Hong Kong Research Grants Council, Project No. 601013. 
The research of L. Qu was supported by the Natural Science Foundation of China (No. 61272484)
and the Basic Research Fund of the National University of Defense Technology (No. CJ 13-02-01). 
The research of Q. Wang was supported by the NSERC of Canada and  National Natural Science Foundation of China (No. 61170289). 
The research of P. Yuan was supported by the NSF of China (Grant No. 11271142) and
the Guangdong Provincial Natural Science Foundation (Grant No. S2012010009942). 
}
}
\author{
Cunsheng Ding\thanks{Department of Computer Science and Engineering,
        Hong Kong University of Science and Technology, Clear Water Bay,
        Kowloon, Hong Kong ({\tt cding@ust.hk}).} 
\and Longjiang Qu\thanks{College of Science, National University of Defense Technology, Changsha, Hunan, P. R. China, 410073 (\tt ljqu\_happy@hotmail.com).} 
\and Qiang Wang\thanks{School of Mathematics and Statistics,  
Carleton University,  
1125 Colonel By Drive, Ottawa, Ontario,  
K1S 5B6, Canada (\tt wang@math.carleton.ca).} 
\and Jin Yuan\thanks{3/16 Vimiera Rd, Eastwood, NSW 2122, Australia (\tt jin.yuan.au@gmail.com).}   
\and Pingzhi Yuan \thanks{School of Mathematics, South China Normal University, Guangzhou 510631, China  (\tt  yuanpz@scnu.edu.cn).}    
}
\begin{document}

\maketitle

%\maketitle 

\begin{abstract}
Permutation polynomials have been a subject of study for a long time and have applications in many areas 
of science and engineering. However, only a small number of specific classes of permutation polynomials are 
described in the literature so far. In this paper we present a number of permutation trinomials over finite fields, 
which are of different forms. 

\end{abstract}

\begin{keywords}
Cryptography, difference set, linear code, permutation polynomial, trinomial
\end{keywords}

\begin{AMS}
11C08, 05A05
\end{AMS}

\maketitle 

\section{Introduction} 

A permutation polynomial $f(x)$ over a finite field is a polynomial that acts as a permutation of the elements of the field, 
i.e. the map $x \mapsto f(x)$ is one-to-one. Permutation polynomials are closely related to highly nonlinear functions  
\cite{DY06,Dobb992} and other areas of mathematics.  They have applications in combinatorial 
designs \cite{DY06,Dobb992}, coding theory \cite{CK03,Yann,ST05} and cryptography \cite{MN85,QTTL,RSA}. For instance, Dickson 
permutation polynomials of order five, i.e.,  
$D_5(x, a)=x^5+ax^3-a^2x$ over $\gf_{3^m}$,  led to a 70-year research breakthrough in 
combinatorics \cite{DY06}, gave a family of perfect nonlinear functions for cryptography 
\cite{DY06}, generated good linear codes \cite{CDY} for data communication and 
storage, and produced optimal signal sets for CDMA communications \cite{DingYin}, to 
mention only a few applications of these Dickson permutation polynomials.      
Information about properties, 
constructions, and applications of permutation polynomials can be found in Cohen \cite{Cohn}, Lidl and Niederreiter 
\cite{LN}, and Mullen \cite{Mull}. Some recent progress on permutation polynomials can be found in 
\cite{AGW11,CaoHuZha14,FH12,Hou,LCX13,LHT13,MullenWang13,Wang13,WuLiu13,YuanDing11,ZZH10,ZhaHu12}.

Permutation polynomials with fewer terms over finite fields with even characteristics  are in particular interesting. For example, in the study of Almost perfect nonlinear (APN) mappings which are of interest for their applications in cryptography,  Dobbertin first proved
a well-known conjecture of Welch stating that for odd $n=2m+1$, the power function $x^{2m+3}$ is even maximally nonlinear \cite{Dobb99a} or, in other terms, that the crosscorrelation function between a binary maximum-length linear shift register sequence of degree $n$ and a decimation of that sequence by $2m+3$ takes on precisely the three values $-1$, $-1 \pm 2^{m+1}$. 
The method in fact relies on the discovery of a class of permutation trinomials. Around the same time period,  Dobbertin proved Niho's conjecture similarly using a class of permutation pentanomials \cite{Dobb99b}. Another example of demonstrating the application of permutation polynomials with fewer terms in constructing cyclic codes can be found in \cite{Ding131}. We note that these interesting permutation polynomials are of the simple form, i.e., they have nonzero coefficients equal to the identity. In this case,  the permutation monomial is trivial to find and there is no permutation binomials with both nonzero coefficients $1$ over finite fields with even characteristic.  This motivates us to search for more classes of permutation trinomials with nonzero trivial coefficients, which are potentially useful in above mentioned applications. However, only a small number of classes of permutation trinomials over $\ftwom$ are known in the literature. To the best of the authors' 
knowledge, the following is a list of known 
classes of permutation trinomials over $\ftwom$: 
\begin{enumerate}
\item Some linearized permutation trinomials described in \cite{LN}.

\item $x+x^3+x^5$ over $\ftwom$, where $m$ is odd (the Dickson
polynomial of degree 5).

\item $x+x^5+x^7$ over $\ftwom$, where $m \not\equiv 0  \pmod{3}$ (the Dickson
polynomial of degree 7). 

\item $x+x^3+x^{2^{(m+1)/2}+1}$ over $\ftwom$, where $m$ is odd \cite{Dobb99a}.  

\item $x^{2^{2k}} +(ax)^{2^k+1} + ax^2$, where $m=3k$ and $a^{(2^m-1)/(2^k-1)} \ne 1$ \cite{BCHO}. 

\item $x^{3 \times 2^{(m+1)/2}+4}+x^{2^{(m+1)/2}+2}+x^{2^{(m+1)/2}}$,  
where $m$ is odd (\cite{Cher98} or \cite[Theorem 4]{Dobb02}). 
\end{enumerate}

In this paper, we present a few new classes of permutation trinomials over $\ftwom$  through the study of the number of solutions of special equations. In Section 2, we  deal with  permutation trinomials  over $\ftwom$ such that $m$ is odd. In contrast, we obtain a few more classes of permutation trinomials  over $\ftwom$ such that $m$ is even in Section 3. We point out some potential applications in the Section 4 and hope that the interested readers will find the usage of these polynomials in constructing linear codes, bent functions, 
and difference sets. Throughout this paper, $\tr_m(x)$ denotes the absolute trace function on $\gf_{2^m}$.

%\section{Preliminaries}  
  
%Throughout this paper, $\tr_m(x)$ denotes the absolute trace function on $\gf_{2^m}$. 
%We will need the following lemma in the sequel. 

%\begin{lemma}\label{lem-evenquadratic} \cite{LN} 
%Let $m$ be a positive integer. The equation $x^2+ux+v=0$, where $u,v \in \gf_{2^m}$, $u\neq 0$, has roots in $\gf_{2^m}$ if and only if $\tr_{m}(v/u^2)=0$.
%\end{lemma}  

\section{The case that $m$ is odd} 

%The following theorem is easy to prove. 

%\begin{theorem} 
%For any odd integer $m>1$, $f(x)=x + x^{(2^m +1)/3} + x^{(2^{m+1} -1)/3}$ is a permutation 
%polynomial over $\gf_{2^m}$. 
%\end{theorem}  

%\begin{proof} Note that $x^3$ and and $g(x)=(1+x)^3$ are permutation polynomials over $\ftwom$. Since $f(x^3)=g(x)$, $f(x)$ is a permutation polynomials. This completes the proof.  \end{proof}

%\begin{theorem} 
%For any odd integer $m>1$, 
%$f(x)=x+x^{2^{(m+1)/2}+1} +x^{2^{(m+3)/2}+3}$ is a permutation 
%polynomial over $\gf_{2^m}$.  
%\end{theorem}  

%\begin{proof} 
%It is known that $h(x)=x^{3\cdot 2^{(m+1)/2}+4}+x^{2^{(m+1)/2}+2}+x^{2^{(m+1)/2}}$  
%is a permutation polynomial (in fact it is an \emph{o}-polynomial, see \cite{Cher98} or
%\cite[Theorem 4]{Dobb02}). Note that $h(y^{2^{(m-1)/2}})=f(y)=y+y^{2^{(m+1)/2}+1} +y^{2^{(m+3)/2}+3}$ 
%and that $y^{2^{(m-1)/2}}$ is a permutation over $\gf_{2^m}$.  Hence, $f(x)$ is a permutation trinomial over 
%$\gf_{2^m}$. This completes the proof.  \end{proof}

%The following theorem follows from the fact that $(x^{-1}+1)^3$ is a permutation 
%polynomial over $\ftwom$. 

%\begin{theorem}\label{thm-4second3}
%Let $m$ be an odd positive integer. Then
%$$
%f(x)=x^{2^{m-2}-1} + x^{2^{m-1}-1} + x^{2^m - 2^{m-2}-1}
%$$
%is a permutation polynomial of $\ftwom$.
%\end{theorem} 

The first family of permutation trinomials are given in the following theorem.  

\begin{theorem}\label{thm-june28}
For any odd integer $m>1$, $f(x)=x + x^{2^{(m+1)/2}-1} + x^{2^m-2^{(m+1)/2}+1}$ is a permutation 
polynomial over $\gf_{2^m}$. 
\end{theorem}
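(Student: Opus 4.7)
The plan is to exhibit $f(x)+1$ as a composition of bijections of $\gf_{2^m}$, and thereby conclude that $f$ itself is a permutation. Set $k=(m+1)/2$, so that $m=2k-1$.

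The cornerstone of the argument will be the identity
\[
f(x) + 1 \;=\; \bigl(x^{2^k - 1} + 1\bigr)^{2^k + 1}\qquad \text{for all }x\in \gf_{2^m}.
\]
To motivate and prove it, I would first reduce the exponent of the third monomial on $\gf_{2^m}^*$ using $x^{2^m}=x$, rewriting $f(x)=x+x^{2^k-1}+x^{2-2^k}$ for $x\ne 0$. The product relation $x^{2^k-1}\cdot x^{2-2^k}=x$ then yields the factorisation $f(x)+1=(x^{2^k-1}+1)(x^{2-2^k}+1)$. Next I would raise the first factor to the $2^k$-th power: Frobenius in characteristic $2$ together with the congruence $(2^k-1)\cdot 2^k=2^{2k}-2^k\equiv 2-2^k\pmod{2^m-1}$ (which holds because $2^{2k}=2\cdot 2^m\equiv 2$) gives $(x^{2^k-1}+1)^{2^k}=x^{2-2^k}+1$. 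Combining shows the identity on $\gf_{2^m}^*$; the case $x=0$ is immediate since both sides equal $1$.

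To conclude, I would show that each ingredient of the composition on the right-hand side is a bijection of $\gf_{2^m}$. The map $x\mapsto x^{2^k-1}$ is a bijection because $\gcd(2^k-1,\,2^m-1)=2^{\gcd(k,\,2k-1)}-1=1$; the translation $y\mapsto y+1$ is trivially a bijection; and $z\mapsto z^{2^k+1}$ is a bijection because the relation $(2^k-1)(2^k+1)=2^{2k}-1\equiv 1\pmod{2^m-1}$ forces $\gcd(2^k+1,\,2^m-1)=1$. It follows that $f+1$, and hence $f$, permutes $\gf_{2^m}$.

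The main obstacle, as so often for this type of result, is spotting the factorisation $f(x)+1=(x^{2^k-1}+1)^{2^k+1}$ in the first place; the product relation $x^{2^k-1}\cdot x^{2-2^k}=x$ is the natural clue that invites one to rewrite $f(x)+1$ as $(x^{2^k-1}+1)(x^{2-2^k}+1)$, and from there the congruence $2^{2k}\equiv 2\pmod{2^m-1}$ forces the second factor to be the $2^k$-th power of the first. Once the identity is in hand the remaining work is pure exponent bookkeeping modulo $2^m-1$.
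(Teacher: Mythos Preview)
Your proof is correct and considerably shorter than the paper's. The factorisation $f(x)+1=(x^{2^k-1}+1)^{2^k+1}$ is valid as a function on $\gf_{2^m}$: indeed, expanding $(x^{2^k-1}+1)^{2^k+1}=(x^{2^k-1}+1)^{2^k}(x^{2^k-1}+1)$ and reducing $x^{2^{2k}-2^k}$ modulo $x^{2^m}-x$ to $x^{2^m-2^k+1}$ recovers exactly $f(x)+1$, with the boundary case $x=0$ checked directly. The three gcd verifications are all correct; in particular the observation $(2^k-1)(2^k+1)\equiv 1\pmod{2^m-1}$ simultaneously certifies that both monomial maps are bijections and that they are mutually inverse.

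The paper proceeds quite differently. It first uses the index shift $x\mapsto x^{2^{(m-1)/2}-1}$ to pass from $f$ to an auxiliary function $g(x)=x^{-(2^{(m+1)/2}+2)}(1+x^2+x^{2^{(m+1)/2}})$ on $\gf_{2^m}^*$, then argues directly that $g(x)=a$ has at most one nonzero solution: after the substitution $y=x^2$ and raising to the $2^{(m+1)/2}$-th power, the two resulting equations are combined to eliminate the high-degree term, eventually collapsing to a monomial equation in $y$ whose unique solvability follows from a gcd computation. Your approach bypasses all of this equation manipulation by recognising the hidden multiplicative structure; the paper's method, by contrast, is closer to the Dobbertin-style ``multivariate'' technique used elsewhere in the paper (e.g.\ in Theorem~\ref{thm-oddmain}) and is more robust when no such clean factorisation is available.
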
  

\begin{proof} 
We have that 
\begin{eqnarray*}
f(x) &=& x + x^{2^{(m+1)/2}-1} + x^{2^m-2^{(m+1)/2}+1} \\
     &=& x\left(1 + x^{2^{(m+1)/2}-2} + x^{2^m-2^{(m+1)/2}} \right) \\
     &=& x \left(1 + x^{2\cdot ( 2^{(m-1)/2}-1)} + x^{2^{(m+1)/2} (2^{(m-1)/2}-1)} \right).
\end{eqnarray*}
 Because $\gcd(2^{(m-1)/2} -1, 2^m -1) =1$, $f(x)$ is a permutation polynomial over $\gf_{2^m}$ if and only if 
$$g(x) = x^{2^m -1 - (2^{(m+1)/2} +2)} \left(1+x^2 + x^{2^{(m+1)/2}} \right)$$ 
is a permutation polynomial over $\gf_{2^m}$. We note that $g(0) =0$ and 
$$g(x) = \frac{1+x^2 + x^{2^{(m+1)/2}}}{x^{2^{(m+1)/2} +2}}$$ 
when $x \neq 0$.

First of all, we show that $x=0$ is the only solution to $g(x) =0$. If $g(x) =0$, then either $x=0$ or $1+x^2 +  x^{2^{(m+1)/2}} = 0$. If $1+x^2 +  x^{2^{(m+1)/2}} = 0$, then   $1+x^{2^{(m+1)/2}} + x^{2^m} =0$ after we raise both sides to the power of $2^{(m-1)/2}$. Adding the above two equations, we obtain $x^{2^m} + x^2 =0$.  Therefore we obtain either $x =0$ or $x = 1$. However, $g(1) =1$. Hence the only solution to $g(x) =0$ is $x=0$.

Next we prove  that $g(x) =a$ has a  unique nonzero solution for each nonzero $a \in \gf_{2^m}$. That is, for each nonzero $a\in \gf_{2^m}$, we prove that there exists a unique nonzero solution $x$ for the following equation:
\begin{equation} 
\label{eqn1}
 \frac{1+x^2 + x^{2^{(m+1)/2}}}{x^{2^{(m+1)/2} +2}} = a. 
\end{equation}
Rewriting Equation~(\ref{eqn1}), we obtain the following equation
\begin{equation} 
\label{eqn2}
 a x^{2^{(m+1)/2} +2} + x^{2^{(m+1)/2}} +  x^2 + 1 =0.
\end{equation}
Let $y=x^2$. Then Equation~(\ref{eqn2}) becomes that  
\begin{equation} 
\label{eqn3}
 a y^{2^{(m-1)/2} +1} + y^{2^{(m-1)/2}} +  y + 1 =0. 
\end{equation}

Now we try to solve Equation~(\ref{eqn3}) for each nonzero $a$ and nonzero $y$. 
First, if $a=1$, then  we have 
$$ y^{2^{(m-1)/2} +1} + y^{2^{(m-1)/2}} +  y + 1 = (y^{2^{(m-1)/2}} +1)(y+1) =(y+1)^{2^{(m-1)/2} +1}=0.$$ 
 Hence  $y=1$ is the unique nonzero solution to Equation~(\ref{eqn3}) for $a=1$.

From now on, we assume $a \neq 1, 0$.
Raising  the power of $2^{(m+1)/2}$ to Equation~(\ref{eqn3}), we obtain
\begin{equation} 
\label{eqn4}
 a^{2^{(m+1)/2}} y^{2^{(m+1)/2} +1} + y + y^{2^{(m+1)/2}}  + 1 =0.
\end{equation}

Adding Equations~(\ref{eqn3}) and ~(\ref{eqn4}), we have
\begin{equation}
\label{eqn5}
 a^{2^{(m+1)/2}} y^{2^{(m+1)/2} +1} + ay^{2{(m-1)/2}+1} + y^{2^{(m+1)/2}}  + y^{2^{(m-1)/2}} =0.
\end{equation}

Because $y\neq 0$, dividing Equation~(\ref{eqn5}) by $y^{2^{(m-1)/2}}$ results in
\begin{equation}
\label{eqn6}
 a^{2^{(m+1)/2}} y^{2^{(m-1)/2} +1} + ay + y^{2^{(m-1)/2}}  + 1 =0.
\end{equation}

Therefore the sum of Equations~(\ref{eqn3}) and (\ref{eqn6}) yields
\begin{equation}
\label{eqn7}
 (a^{2^{(m+1)/2}} +a) y^{2^{(m-1)/2} +1} + (a+1)y  =0.
\end{equation}

Because $y\neq 0$, we obtain $(a^{2^{(m+1)/2}} +a) y^{2^{(m-1)/2} } + (a+1)  =0$. Since $(a^{2^{(m+1)/2}} +a) \neq 0$ for $a\neq 0, 1$, the polynomial $(a^{2^{(m+1)/2}} +a) y^{2^{(m-1)/2}} + (a+1)$ is a permutation polynomial over $\gf_{2^m}$ as $\gcd(2^{(m-1)/2}, 2^m -1) =1$. Hence there exists a unique nonzero solution $y$ to  $(a^{2^{(m+1)/2}} +a) y^{2^{(m-1)/2}} + (a+1)  =0$ for $a\neq 0, 1$. Hence there exists at most one nonzero solution $x$ to Equation~(\ref{eqn1}) for each nonzero $a$. Therefore there exists a unique solution to $g(x) =a$ for each $a$. Hence the proof is complete. 
\end{proof}

The following theorem follows from Theorem \ref{thm-june28}. 

\begin{theorem}\label{thm-4second5}
Let $m>1$ be an odd positive integer. Then
$$
f(x)=x^{2^{(m-1)/2}-1} + x^{2^m-2^{(m-1)/2}-2} + x^{2^m - 2^{(m-1)/2} -1}
$$
is a permutation polynomial of $\ftwom$.
\end{theorem}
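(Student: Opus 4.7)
The plan is to mirror the Frobenius-and-add argument used in the proof of Theorem~\ref{thm-june28}, after recasting $f$ into a convenient rational form. Set $k=(m-1)/2$, so $(m+1)/2 = k+1$. First I would factor $x^{2^k-1}$ out of $f(x)$ and reduce the resulting exponents modulo $2^m-1$ (using $x^{2^m-1}=1$ on $\gf_{2^m}^*$); this rewrites $f$, as a function on $\gf_{2^m}^*$, in the convenient form
\[
f(x) \;=\; \frac{1 + x + x^{2^{k+1}}}{x^{2^k+1}}.
\]
Combined with $f(0)=0$, the equation $f(x)=a$ becomes, for $x \in \gf_{2^m}^*$,
\[
a\, x^{2^k+1} + x^{2^{k+1}} + x + 1 \;=\; 0 \qquad (\ast),
\]
and the theorem reduces to showing $(\ast)$ has exactly one solution for each $a \in \gf_{2^m}$.

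I would first handle the special case $a=0$: combining $1+x+x^{2^{k+1}}=0$ with its $2^k$-th Frobenius power and using $x^{2^m}=x$ forces $x \in \{0,1\}$, and since $f(1)=1 \ne 0$, only $x=0$ solves $f(x)=0$. A useful by-product is the fact that $1+x+x^{2^{k+1}}=0$ has no nonzero solution in $\gf_{2^m}$, which will be needed later.

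For $a \ne 0$, I plan to raise $(\ast)$ to the $2^{k+1}$-th power, use $x^{2^{2k+1}} = x$ to collapse exponents, and add the result back into $(\ast)$; after factoring out $x$, this produces a second equation in $x$, $x^{2^k}$, and $x^{2^{k+1}}$. Adding that equation back into $(\ast)$ and factoring out $x^{2^k}$ will eliminate the constant terms and yield the linear-in-$x^{2^k}$ relation
\[
(1 + a^{2^{k+1}})\, x^{2^k} + a\, x + a \;=\; 0.
\]
If $a=1$ this collapses directly to $x+1=0$, giving the unique solution $x=1$. For $a \notin \{0,1\}$, since $\gcd(2^{k+1}, 2^m-1)=1$ forces $1+a^{2^{k+1}} \ne 0$, the relation rearranges to $x^{2^k}=c(x+1)$ with $c = a/(1+a^{2^{k+1}})$. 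I would then substitute $x^{2^{k+1}} = c^2(x+1)^2$ and $x^{2^k+1} = c\,x(x+1)$ back into $(\ast)$ and factor out $(x+1)$, obtaining $(x+1)\bigl[(ac+c^2)\,x + (c+1)^2\bigr] = 0$. The root $x=1$ would force $a=1$ (excluded), so the unique candidate is $x = (c+1)^2/[c(a+c)]$.

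The main obstacle, and the crux of the argument, will be verifying that this candidate is nonzero, well-defined, and actually satisfies $(\ast)$ — not just a consequence of it — since the Frobenius manipulations produce derivations, not equivalences. Well-definedness ($a+c \ne 0$) is immediate from the explicit formula $a+c = a^{2^{k+1}+1}/(1+a^{2^{k+1}})$. Nonzero-ness amounts to $c \ne 1$, which translates to $1 + a + a^{2^{k+1}} \ne 0$, precisely the fact extracted in the $a=0$ analysis. Finally, to show the candidate really satisfies $(\ast)$ I would check the auxiliary identity $x^{2^k} = c(x+1)$ by direct computation; both sides simplify to $(1 + a^2 + a^{2^{k+1}})/a^{2^{k+1}+1}$, so the reduction is reversible and $(\ast)$ holds. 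This reversibility is the reason the result genuinely ``follows from'' Theorem~\ref{thm-june28}: essentially the same chain of Frobenius identities, applied with minor adjustments to the shifted exponent pattern of $f$, delivers the conclusion.
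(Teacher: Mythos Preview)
Your argument is correct. After the rewriting $f(x)=\dfrac{1+x+x^{2^{k+1}}}{x^{2^{k}+1}}$ (which is valid), the Frobenius-and-add elimination goes through exactly as you describe: raising $(\ast)$ to the $2^{k+1}$-th power, adding, dividing by $x$, adding again and dividing by $x^{2^{k}}$ indeed yields $(1+a^{2^{k+1}})x^{2^{k}}+ax+a=0$, and your case analysis ($a=0$, $a=1$, $a\notin\{0,1\}$) is clean. One small remark: once you have shown that $(\ast)$ admits \emph{at most one} nonzero solution for every $a$, you are already done, since a self-map of the finite set $\gf_{2^m}$ is bijective as soon as it is injective; your careful verification that the candidate $x=\dfrac{(c+1)^2}{c(a+c)}$ really satisfies $x^{2^{k}}=c(x+1)$ is therefore not strictly needed (though it is correct and does no harm).

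Where your write-up differs from the paper is in spirit rather than in content. The paper gives no proof at all of this theorem beyond the sentence ``follows from Theorem~\ref{thm-june28}'', so the authors evidently had a short reduction in mind (quoting Theorem~\ref{thm-june28} as a black box) rather than a parallel run of the Frobenius argument. What you have written is essentially a fresh proof, reusing the \emph{method} of Theorem~\ref{thm-june28} rather than its \emph{statement}. That is perfectly legitimate, and your closing sentence acknowledges this; just be aware that ``follows from'' in the paper is meant more literally than your derivation reflects. The upside of your route is that it is completely self-contained and makes the uniqueness (and even existence) of the preimage explicit; the upside of the paper's route is brevity.
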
 

The third family of permutation trinomials is described in the next theorem. 

\begin{theorem}\label{thm-oddmain}
Let $m$ be an odd integer. Then
$$
f(x)=x+x^3+x^{2^m-2^{(m+3) / 2}+2}
$$
is a permutation polynomial of $\ftwom$.
\end{theorem}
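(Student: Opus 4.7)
The plan is to mirror the proof of Theorem~\ref{thm-june28}: I derive a ``square'' identity for $f$ and use it to reduce $f(x)=a$ to a tractable polynomial equation. Set $T=2^{(m+1)/2}$, so that $T^2\equiv 2\pmod{2^m-1}$. Factoring $f(x)=x(1+x^2+x^{2^m-2T+1})$, multiplying by $x^{2T-1}$, and using $x^{2^m-1}=1$, for $x\in\gf_{2^m}^*$ I obtain
\[
x^{2T-1}\,f(x) \;=\; x^2+x^{2T}+x^{2T+2} \;=\; (x+x^T+x^{T+1})^2 \;=:\; \phi(x)^2,
\]
the middle equality being the Freshman's dream in characteristic $2$. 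A direct expansion $(1+x)^{T+1}=(1+x)(1+x^T)$ in $\gf_2[x]$ yields the auxiliary identity $\phi(x)=1+(1+x)^{T+1}$. Because $(T+1)(T-1)=T^2-1\equiv 1\pmod{2^m-1}$, the exponent $T+1$ is coprime to $2^m-1$, so $y\mapsto y^{T+1}$ (and therefore $\phi$) is a permutation of $\gf_{2^m}$.

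This already handles the zero case: if $f(x)=0$ and $x\ne 0$ then $\phi(x)=0$, i.e.\ $(1+x)^{T+1}=1$, and bijectivity of $y\mapsto y^{T+1}$ forces $1+x=1$, contradicting $x\ne 0$. For $a\in\gf_{2^m}^*$ and $x\ne 0$, the equation $f(x)=a$ is equivalent to $\phi(x)^2=a\,x^{2T-1}$; taking the unique square root in characteristic $2$ converts this to $\phi(x)=a^{1/2}\,x^s$, where $2s\equiv 2T-1\pmod{2^m-1}$. Substituting the identity for $\phi$ gives $(1+x)^{T+1}=1+a^{1/2}x^s$, and raising to the $(T-1)$-st power (the inverse of $T+1$) produces $1+x=(1+a^{1/2}x^s)^{T-1}$. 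Multiplying both sides by $1+a^{1/2}x^s$ and applying the Frobenius $Y\mapsto Y^T$ (legitimate since $T$ is a power of $2$) collapses the right-hand side to $1+a^{T/2}x^{sT}$, yielding the four-term equation
\[
x \;+\; a^{1/2}x^s \;+\; a^{1/2}x^{s+1} \;+\; a^{T/2}x^{sT} \;=\; 0.
\]
Combining this with Frobenius raises of $\phi(x)=a^{1/2}x^s$ — notably the $T$-th raise $x^T+x^2+x^{T+2}=a^{T/2}x^{sT}$ obtained via $T^2\equiv 2$ — I plan to eliminate all but one of the monomials and arrive at an equation of the form $\alpha(a)\,x^k+\beta(a)=0$ in which $\alpha(a)$ is a nonzero multiple of $a+1$ (or a similar factor); uniqueness of the nonzero solution for each $a\ne 1$ then follows from $\gcd(k,2^m-1)=1$. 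The case $a=1$ is handled directly: $\phi(x)=x^s$ has $x=1$ as the obvious solution, and uniqueness is immediate since $\phi$ is a permutation.

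The main obstacle will be the elimination step in the previous paragraph: the relations $T^2\equiv 2$ and $(T+1)(T-1)\equiv 1\pmod{2^m-1}$ admit many manipulations, and isolating the right combination of Frobenius raises that leaves a clean equation — with a leading coefficient that provably does not vanish for any $a\ne 1$ — is the delicate technical core of the argument. A small-$m$ sanity check (for instance $m=3$, where $s=0$ and the reductions collapse trivially to the fact that $\phi$ is a permutation) will likely be needed to cover edge cases.
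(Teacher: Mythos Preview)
Your identity $x^{2T-1}f(x)=\phi(x)^2$ with $\phi(x)=1+(1+x)^{T+1}$ is correct, and it disposes of $f(x)=0$ cleanly. The rest of the plan, however, does not close. The four-term equation you obtain by raising $(1+x)^{T+1}=1+a^{1/2}x^s$ to the $(T-1)$-st power and then multiplying by $1+a^{1/2}x^s$ is produced from the original by \emph{invertible} operations on $\gf_{2^m}^*$, so it is equivalent to $\phi(x)=a^{1/2}x^s$, not an extra constraint; indeed, substituting $a^{1/2}x^s=\phi(x)$ and $a^{T/2}x^{sT}=\phi(x)^T=x^T+x^2+x^{T+2}$ back into your four-term relation collapses it to $0=0$. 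Thus ``combining this with Frobenius raises of $\phi(x)=a^{1/2}x^s$'' is just combining Frobenius raises of a single equation. If one actually carries out that elimination---writing $y=x^T$, extracting $y^2=x^3/(x^3+x+a)$ from $f(x)=a$, extracting $y=a^T x(x^3+x+a)/(x^2+ax+1)$ from its $T$-th raise, and equating---one lands on
\[
x(x^2+ax+1)^2=a^{2T}(x^3+x+a)^3,
\]
a degree-$9$ relation, not a monomial equation $\alpha(a)x^k=\beta(a)$. This is exactly why the strategy of Theorem~\ref{thm-june28} does not transplant here. Separately, your $a=1$ argument is wrong: that $\phi$ is a permutation says $\phi(x)=c$ has one solution for each \emph{constant} $c$, but in $\phi(x)=x^s$ the right-hand side moves with $x$, so uniqueness is not immediate.

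The paper's proof takes a different route. Rather than fix $a$ and solve $f(x)=a$, it assumes $f(x)=f((1{+}a)x)$ for some nonzero $x$ and some $a\ne 0,1$, obtaining a homogeneous relation $A_1x^2y^2+A_2y^2+A_3x^2=0$ (with $y=x^{2^d}$, $d=(m+1)/2$) whose coefficients $A_i$ depend only on $a$. Eliminating $y^2$ against the $2^d$-th raise gives a genuine quadratic $\gamma^2+\gamma+D=0$, and the heart of the argument is a lengthy trace computation (their Claim~1: $\tr_m(D_1)=1$ for all $a\notin\gf_2$) which, together with $\gcd(3,2^m-1)=1$, forces $b=a^{2^d}=a^2$ and hence $b=b^2$, a contradiction. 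That trace identity is the missing idea; without it---or an equivalent obstruction---the elimination you sketch cannot conclude.
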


\begin{proof} 
Let $d=(m+1)/2$ and  $y=x^{2^d}$. Then $y^{2^d}=x^2$ and for any $x\neq 0$,
$$f(x)=x+x^3+\frac{x^3}{y^2}=\frac{x(x^2y^2+y^2+x^2)}{y^2}.$$

We firstly prove that $f(x)=0$ if and only if $x=0$. Assume, on the contrary, there
exists some $x\in\gf_{2^m}^*$ such that
\begin{equation}\label{EqQ1}
  x^2y^2+y^2+x^2=0.
\end{equation}
Raising the above equation to its $2^d$-th power, we obtain  
\begin{equation}\label{EqQ2}
  x^4y^2+x^4+y^2=0.
\end{equation}
We compute \eqref{EqQ1}$^2+$ \eqref{EqQ2} as follows:
$$(x^4+1)(y^4+y^2)=0.$$
Hence we get $x=y=1$. It is a contradiction since  $f(1)=1\neq 0$.
Thus $f(x)=0$ if and only if $x=0$.

If $f(x)$ is not a permutation, then there exists $x\in \gf_{2^m}^*$
and $a\in \gf_{2^m}^*$ such that $f(x)=f(x+ax)$. Let $b=a^{2^d}$. It is clear that $a, b\neq 0,1$.
Then
$$\frac{x(x^2y^2+y^2+x^2)}{y^2}=\frac{(a+1)x((a^2+1)(b^2+1)x^2y^2+(b^2+1)y^2+(a^2+1)x^2)}{(b^2+1)y^2}.$$
After simplifying and rearranging the terms, we have  
\begin{equation}\label{EqGen}
  A_1x^2y^2+A_2y^2+A_3x^2=0,
\end{equation}
where
\begin{eqnarray*}
% \nonumber to remove numbering (before each equation)
  A_1 &=& a(b+1)^2(a^2+a+1), \\
  A_2 &=& a(b+1)^2, \\
  A_3 &=& (a+1)^3+(b+1)^2.
\end{eqnarray*}
Raising \eqref{EqGen} to its $2^d$-th power, we obtain  
\begin{equation}\label{EqGen2}
  A_1^{2^d}x^4y^2+A_3^{2^d}y^2 + A_2^{2^d}x^4=0.
\end{equation}
We compute \eqref{EqGen}$*(A_1^{2^d}x^4+A_3^{2^d})+$\eqref{EqGen2}$*(A_1x^2+A_2)$ to cancel $y^2$:
\begin{equation}\label{EqGen3}
  B_1x^4+B_2x^2 + B_3=0.
\end{equation}
where
\begin{eqnarray*}
% \nonumber to remove numbering (before each equation)
  B_1 &=& A_3A_1^{2^d}+A_1A_2^{2^d}=(a+1)^4b^2[(a+1)^3+(b+1)^3], \\
  B_2 &=& A_2^{2^d+1}=ab(a+1)^4(b+1)^2, \\
 B_3 &=& A_3^{2^d+1}.
\end{eqnarray*}
Now we claim that $A_1, A_2, B_1, B_2\neq 0$. It can be easily proved and is
left to the interested readers.

Let $x^2=\frac{B_2}{B_1}\gamma$. Plugging it into \eqref{EqGen3}, we get  
\begin{equation}\label{EqGam}
  \gamma^2+\gamma+D=0,
\end{equation}
where $D=\frac{B_1B_3}{B_2^2}$. Furthermore, we have
\begin{equation*}
  D=\frac{B_1B_3}{B_2^2}=\frac{A_3^{2^d+1}(A_3A_1^{2^d}+A_1A_2^{2^d})}{A_2^{2^{d+1}+2}}=\frac{A_1A_3^{2^d+1}}{A_2^{2^d+2}}
  +\frac{A_1^{2^d}A_3^{2^d+2}}{A_2^{2^{d+1}+2}}=D_1+D_1^{2^d},
\end{equation*}
where $D_1=\frac{A_1A_3^{2^d+1}}{A_2^{2^d+2}}=\frac{A_1B_3}{A_2B_2}$.

Now we have the following claim: 

{\bf Claim 1. $\tr_m(D_1)=1$ for any $a\in \gf_{2^m}\setminus\{\gf_2\}$. }

Claim 1 will be proved later.

Raising \eqref{EqGam} to the $2^i$-th power, where $i=0,1,\cdots, d-1$ and then summing them up,
we have
\begin{equation*}
  \gamma^{2^d}=\gamma+\sum_{i=0}^{d-1}(D_1+D_1^{2^d})^{2^i}=\gamma+\sum_{i=0}^{2d-1}D_1^{2^i}=\gamma+D_1+\tr_m(D_1)=\gamma+D_1+1.
\end{equation*}
and
\begin{equation*}
  \gamma^{2^d+1}=\gamma(\gamma+D_1+1)=D_1\gamma+D.
\end{equation*}
Plugging the two equations above into \eqref{EqGen},  we obtain  
$$\frac{A_1B_2^{2^d+1}}{B_1^{2^d+1}}(D_1\gamma+D)+\frac{A_2B_2^{2^d}}{B_1^{2^d}}(\gamma+D_1+1)+\frac{A_3B_2}{B_1}\gamma=0.$$
Multiplying $B_1^{2^d+1}$ across the two sides of the above equation and using $B_2^{2^d}=A_2^{2^d+2}=A_2B_2$, we have 
\begin{equation}\label{EqGam2}
  C_1\gamma+C_2=0,
\end{equation}
where \begin{eqnarray*}
% \nonumber to remove numbering (before each equation)
  C_1 &=& A_1A_2B_2D_1+A_2^2B_1+A_3B_1^{2^d}, \\
  C_2 &=& A_1A_2B_2D+A_2^2B_1(D_1+1).
\end{eqnarray*}
Substituting $D, D_1, B_1$ into the above two equations, and after simplifications, we can easily verify that
$C_1=A_1A_2^{2^d+2}$ and $C_2=A_2^2B_1$. Hence by \eqref{EqGam2}, we have
\begin{equation*}
  \gamma=\frac{C_2}{C_1}=\frac{B_1}{A_1A_2^{2^d}}=\frac{A_2B_1}{A_1B_2}.
\end{equation*}
Plugging it into \eqref{EqGam}, and recalling that $D=\frac{B_1B_3}{B_2^2}$,  we get 
\begin{equation*}
  B_1A_2^2+A_1A_2B_2=A_1^2B_3.
\end{equation*}
Using the definitions of $B_1, B_2$ and $B_3$ to simplify the above equation, we deduce  that 
\begin{equation*}
  A_1^{2^d}A_2^2=A_1^2A_3^{2^d}.
\end{equation*}
Hence we have
\begin{equation*}
  (a+1)^4[(b+1)^3+1]a^2(b+1)^4=(b+1)^4[(a+1)^6+1][(b+1)^3+(a+1)^4],
\end{equation*}
which is reduced to
\begin{equation*}
  (b+1)^3=(a+1)^6.
\end{equation*}
Since $\gcd(3, 2^m-1)=1$, we have $b+1=a^2+1$, ie.
$ 
  b=a^2.
$ 
Raising it to the $2^d$-th power, we get  
\begin{equation*}
  a^2=b^2=b.
\end{equation*}
It is a contradiction since $b\neq 0,1$.

Now,  to finish the proof of the theorem, it suffices to prove Claim 1.

Let $$R=\frac{A_3}{A_2}+1=\frac{(a+1)(a+b)^2}{a(b+1)^2}.$$ Then
$$D_1=\frac{A_1A_3^{2^d+1}}{A_2^{2^d+2}} = \frac{A_1}{A_2}\cdot (\frac{A_3}{A_2})^{2^d+1}=(a^2+a+1)(R+1)^{2^d+1}.$$
Hence
\begin{eqnarray*}
  \tr_m(D_1) &=& \tr_m((a^2+a+1)R^{2^d+1})+\tr_m((a^2+a)R^{2^d})+\\             
                    &  & \tr_m((a^2+a)R)+\tr_m(R^{2^d}+R)+\tr_m(a^2+a+1). 
\end{eqnarray*}
Let $E_1=\tr_m((a^2+a+1)R^{2^d+1})$, $E_2=\tr_m((a^2+a)R^{2^d})$ and $E_3=\tr_m((a^2+a)R)$. Then
\begin{equation}\label{EqD1}
  \tr_m(D_1)=E_1+E_2+E_3+1.
\end{equation}
Firstly,
\begin{equation*}
  E_3=\tr_m((a^2+a)R)=\tr_m\left(\frac{(a+1)^2(a+b)^2}{(b+1)^2}\right),
\end{equation*}
\begin{equation*}
  E_2=\tr_m((a^2+a)R^{2^d})=\tr_m((b^2+b)R^2)=\tr_m\left(\frac{(a+1)^2(b^2+b)(a+b)^4}{a^2(b+1)^4}\right),
\end{equation*}
and \begin{eqnarray*}
    % \nonumber to remove numbering (before each equation)
      E_1 &=& \tr_m((a^2+a+1)R^{2^d+1}) \\
      &=&\tr_m\left((a^2+a+1)\cdot \frac{(a+1)(a+b)^2}{a(b+1)^2} \cdot \frac{(b+1)(a^4+b^2)}{b(a+1)^4} \right) \\
       &=& \tr_m\left(\frac{(a^3+1)(a+b)^2(a^4+b^2)}{a(a+1)^4(b^2+b)} \right) \\
       &=&  \tr_m\left(\frac{(b^3+1)(a^4+b^2)(a^4+b^4)}{b(b+1)^4(a^4+a^2)} \right).
    \end{eqnarray*}
Noting in the second equality of $E_2$ and the last one of $E_1$, we raise left hands of equalities to the $2^d$-th power.
Hence

\begin{eqnarray*}
    % \nonumber to remove numbering (before each equation)
      E_1+E_2 &=& \tr_m\left(\frac{(a^4+b^4)}{a^2b(b+1)^4(a^2+1)}\left[(b^3+1)(a^4+b^2)+(a+1)^4(b^2+b) b\right] \right) \\
      &=& \tr_m\left(\frac{(a^4+b^4)}{a^2b(b+1)^4(a^2+1)}\left[(a^4+b^3)(b+1)^2\right] \right) \\
      &=& \tr_m\left(\frac{(a^4+1+b^4+1)(a^4+b^3)}{a^2b(b+1)^2(a^2+1)} \right) \\
      &=& \tr_m\left(\frac{(a+1)^2(a^4+b^3)}{a^2b(b+1)^2}\right)+\tr_m\left( \frac{(b+1)^2(a^4+b^3)}{a^2b(a+1)^2} \right) \\
      &=& F_1+F_2,
    \end{eqnarray*}
where $F_1=\tr_m\left(\frac{(a+1)^2(a^4+b^3)}{a^2b(b+1)^2}\right)$ and $F_2=\tr_m\left( \frac{(b+1)^2(a^4+b^3)}{a^2b(a+1)^2} \right)$.
We have
\begin{equation*}
  F_2=\tr_m\left( \frac{(b+1)^2(a^4+b^3)}{a^2b(a+1)^2}\right)=\tr_m\left( \frac{(a+1)^4(a^6+b^4)}{a^2b^2(b+1)^2}\right)=\tr_m\left( \frac{(a+1)^2(a^3+b^2)}{ab(b+1)}\right),
\end{equation*}
and
\begin{eqnarray*}
% \nonumber to remove numbering (before each equation)
F_1 + E_3 &=& \tr_m\left(\frac{(a+1)^2}{a^2b(b+1)^2}\left[(a^4+b^3)+a^2b(a+b)^2\right]\right) \\
&=& \tr_m\left(\frac{(a+1)^2\left[a^4(b+1)+(a+1)^2b^3\right]}{a^2b(b+1)^2}\right) \\
&=& \tr_m\left(\frac{a^2(a+1)^2}{b(b+1)}\right) +  \tr_m\left(\frac{(a+1)^2b}{a(b+1)}\right).
\end{eqnarray*}
Thus
\begin{eqnarray*}
% \nonumber to remove numbering (before each equation)
E_1 + E_2+E_3 &=& F_1+F_2+E_3\\
&= &\tr_m\left(\frac{1}{ab(b+1)}\left[a^3(a+1)^2 +(a+1)^2b^2+ (a+1)^2(a^3+b^2)\right]\right)\\
&=& 0.
\end{eqnarray*}
Then Claim 1 follows from \eqref{EqD1}. The proof is now completed. 
\end{proof}

The following theorem follows from Theorem \ref{thm-oddmain} easily. 

\begin{theorem}\label{thm-4second4}
Let $m$ be an odd positive integer. Then
$$
f(x)=x^{2^{m-2}-1} + x^{2^{m-2} + 2^{(m-1)/2}-1} + x^{2^m - 2^{m-2}-1}
$$
is a permutation polynomial of $\ftwom$.
\end{theorem}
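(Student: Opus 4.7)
The plan is to deduce Theorem \ref{thm-4second4} directly from Theorem \ref{thm-oddmain} by an exponent substitution $x\mapsto x^d$ for a suitable $d$. Write $g(x)=x+x^3+x^{2^m-2^{(m+3)/2}+2}$ for the permutation of Theorem \ref{thm-oddmain}. I would exhibit an integer $d$ with $\gcd(d,2^m-1)=1$ such that the three exponents $d$, $3d$, and $(2^m-2^{(m+3)/2}+2)d$ of $g(x^d)$ coincide, modulo $2^m-1$, with the exponents $2^{m-2}-1$, $2^{m-2}+2^{(m-1)/2}-1$, and $3\cdot 2^{m-2}-1$ of $f(x)$. Once this is checked, the conclusion is immediate: $x\mapsto x^d$ is a bijection of $\ftwom$, $g$ is a permutation by Theorem \ref{thm-oddmain}, and $f=g\circ(x\mapsto x^d)$ is their composition.

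The right candidate is $d=3\cdot 2^{m-2}-1$. To verify $\gcd(d,2^m-1)=1$, compute $4d=3\cdot 2^m-4=3(2^m-1)-1$, so $4d\equiv -1\pmod{2^m-1}$; since $2^m-1$ is odd, $4$ is a unit modulo $2^m-1$, and hence so is $d$. This also reveals the motivation for the choice: $d\equiv -4^{-1}\equiv -2^{m-2}\pmod{2^m-1}$, which is the simplest residue class making the constant term of $g(x^d)$ (i.e.\ $d$ itself) land at the largest exponent of $f$.

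The remaining verification is pure modular arithmetic. By construction $d$ equals the third exponent $3\cdot 2^{m-2}-1$ of $f$. Using $9\cdot 2^{m-2}=2^{m+1}+2^{m-2}\equiv 2+2^{m-2}\pmod{2^m-1}$ gives $3d\equiv 2^{m-2}-1$, the first exponent of $f$. For the last term, expanding $(2^m-2^{(m+3)/2}+2)d$ and reducing the large powers via $2^{2m-2}\equiv 2^{m-2}$ and $2^{(3m-1)/2}\equiv 2^{(m-1)/2}$ modulo $2^m-1$ collapses the sum to $9\cdot 2^{m-2}+2^{(m-1)/2}-3\equiv 2^{m-2}+2^{(m-1)/2}-1$, the remaining exponent of $f$. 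All three reduced exponents lie in $\{1,\dots,2^m-1\}$, so $g(x^d)$ and $f(x)$ define the same function on $\ftwom$.

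I foresee no real obstacle. The only nonroutine step is spotting the ansatz $d=3\cdot 2^{m-2}-1$, which is forced by the identity $4\cdot 2^{m-2}\equiv 1\pmod{2^m-1}$; once this is on the table, the rest is a short check that three specific exponents agree modulo $2^m-1$, and the permutation property of $f$ is inherited from $g$ through the bijection $x\mapsto x^d$.
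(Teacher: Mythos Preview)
Your proposal is correct and is precisely the kind of derivation the paper has in mind: the paper merely states that Theorem \ref{thm-4second4} ``follows from Theorem \ref{thm-oddmain} easily'' without giving any details, and your substitution $x\mapsto x^d$ with $d=3\cdot 2^{m-2}-1$ (equivalently $d\equiv -2^{m-2}\pmod{2^m-1}$) supplies exactly those details. The exponent checks are all right, and the invertibility of $d$ via $4d\equiv -1\pmod{2^m-1}$ is clean.
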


\section{The case that $m$ is even}

We will need the following lemma in the sequel. 

\begin{lemma}\label{lem-evenquadratic} \cite{LN} 
Let $m$ be a positive integer. The equation $x^2+ux+v=0$, where $u,v \in \gf_{2^m}$, $u\neq 0$, has roots in $\gf_{2^m}$ if and only if $\tr_{m}(v/u^2)=0$.
\end{lemma}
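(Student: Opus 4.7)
The plan is to reduce the quadratic to the Artin--Schreier normal form and then identify the image of the map $y \mapsto y^2+y$ with the kernel of the trace. Since $u \neq 0$, I would substitute $x = uy$; then $x^2+ux+v=0$ becomes $u^2 y^2 + u^2 y + v = 0$, which after dividing by $u^2$ reads
\[
 y^2 + y = v/u^2.
\]
So solving the original equation in $\ftwom$ is equivalent to solving $y^2+y = c$ in $\ftwom$ with $c = v/u^2$, and it remains to show that this latter equation is solvable iff $\tr_m(c)=0$.

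Next I would study the $\ftwo$-linear map $\phi : \ftwom \to \ftwom$ defined by $\phi(y) = y^2+y$. Its kernel is the set of $y$ with $y(y+1)=0$, hence $\ker\phi = \ftwo$, so by rank-nullity its image has size $2^{m-1}$. On the other hand, for every $y \in \ftwom$,
\[
\tr_m(\phi(y)) = \tr_m(y^2) + \tr_m(y) = \tr_m(y) + \tr_m(y) = 0,
\]
using that the trace is Frobenius-invariant. Thus $\operatorname{Im}(\phi)$ is contained in $\ker(\tr_m)$, which is itself a hyperplane of size $2^{m-1}$ since $\tr_m$ is a nonzero $\ftwo$-linear form.

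A dimension (or cardinality) count then forces $\operatorname{Im}(\phi) = \ker(\tr_m)$, and therefore $y^2+y=c$ has a solution in $\ftwom$ if and only if $\tr_m(c)=0$. Translating back via $c = v/u^2$ completes the argument. The only place where something nontrivial is happening is the equality $\operatorname{Im}(\phi) = \ker(\tr_m)$; everything else is substitution or a direct use of additivity and Frobenius invariance of the trace, so I anticipate no real obstacle beyond being careful to note that $u \neq 0$ is needed to make the substitution $x=uy$ a bijection.
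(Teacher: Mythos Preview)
Your argument is correct and is the standard textbook proof: reduce via $x=uy$ to the Artin--Schreier equation $y^2+y=v/u^2$, then identify $\operatorname{Im}(y\mapsto y^2+y)$ with $\ker(\tr_m)$ by a dimension count. The paper does not supply its own proof of this lemma at all; it simply cites it from Lidl and Niederreiter \cite{LN}, so there is nothing to compare against beyond noting that your proof is essentially the one found there.
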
  

Suppose $r=2^{m}$ with $m$ even. For any $u\in\fr$, we use $\ubar$ to denote $u^{2^{m/2}}$.
Clearly, we have $\overline{\overline{u}}=u$ for any $u \in \gf_{2^m}$.

\begin{theorem} 
For any even integer $m \ge 2$, $f(x)=x + x^{2^{(m+2)/2}-1} + x^{2^m-2^{m/2}+1}$ is a permutation 
polynomial over $\gf_{2^m}$. 
\end{theorem}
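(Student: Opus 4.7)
Set $k=m/2$ and, as in the paper, write $\xbar=x^{2^k}$. The plan is to perform a fiberwise analysis via the relative trace $\psi:\gf_{2^m}\to\gf_{2^k}$, $\psi(x)=x+\xbar$. First I would rewrite $f(x)=x+\xbar^2/x+x^2/\xbar$ for $x\ne 0$ by noting that $x^{2^{(m+2)/2}-1}=\xbar^2/x$ and $x^{2^m-2^{m/2}+1}=x^2/\xbar$. A direct calculation gives $\overline{f(x)}=f(x)+\psi(x)$, so $\psi\circ f=\psi$; hence $f$ preserves every fiber $\psi^{-1}(s)$, and since the $2^k$ fibers partition $\gf_{2^m}$ into affine lines of size $2^k$, $f$ is a permutation iff it is bijective on each fiber. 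The fiber $\psi^{-1}(0)=\gf_{2^k}$ is handled immediately: $\xbar=x$ there makes $f(x)=x$.

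For $s\in\gf_{2^k}^*$ and a chosen $x_0\in\psi^{-1}(s)$, the substitution $\xbar=x+s$ simplifies $f$ on the fiber to $f(x)=\xbar+s^3/N(x)$, with $N(x)=x\xbar\in\gf_{2^k}$. Parametrising $x=x_0+\lambda$ ($\lambda\in\gf_{2^k}$) and setting $A=N(x_0)$, $B(\lambda)=\lambda^2+s\lambda+A$, the fiberwise map becomes $\tilde g(\lambda)=\lambda+s^3/B(\lambda)$ on $\gf_{2^k}$; note $B$ is non-vanishing on $\gf_{2^k}$ because its two roots in $\gf_{2^m}$ are $x_0$ and $\xbar_0$, neither of which lies in $\gf_{2^k}$ (since $s\ne 0$). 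Suppose $\tilde g(\lambda_1)=\tilde g(\lambda_2)$ with $\lambda_1\ne\lambda_2$; writing $P=\lambda_1+\lambda_2$, $Q=\lambda_1\lambda_2$, $P'=P+s$, $Q'=Q+A$, clearing denominators leads to $(Q')^2=P'(sQ'+AP'+s^3)$. The case $P'=0$ would demand $\lambda_1,\lambda_2\in\gf_{2^k}$ to be the two roots of $X^2+sX+A$, which is impossible since that polynomial has no roots in $\gf_{2^k}$. For $P'\ne 0$, setting $R=Q'/P'$ produces the $\gf_{2^k}$-quadratic $R^2+sR+(A+s^3/P')=0$, whose solvability demands $\tr_k(s/P')=\tr_k(A/s^2)=:T$ by Lemma~\ref{lem-evenquadratic}.

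The hard part will be the trace identity $\tr_k(Q/P^2)=T$. With the further substitutions $R'=R/s$, $T_0=A/s^2$, $\xi=s/P'$, $\sigma=1+\xi$ (so that $P=P'\sigma$ and the $R$-quadratic becomes $R'^2+R'=T_0+\xi$), a direct expansion gives $Q/P^2=(R'+T_0)/\sigma^2+R'/\sigma+T_0$; using $\tr_k(y^2)=\tr_k(y)$ together with the incidental identity $(\sqrt{R'+T_0}+R')^2=\xi$ (obtained by squaring and invoking the $R'$-relation) collapses the first two terms to $\tr_k(\sqrt{\xi}/\sigma)=\tr_k(\xi/\sigma^2)=\tr_k(1/\sigma)+\tr_k(1/\sigma^2)=0$, proving $\tr_k(Q/P^2)=T$. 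Finally, $T=1$: the element $x_0/s\notin\gf_{2^k}$ satisfies $(x_0/s)+\overline{(x_0/s)}=1$, so its minimal polynomial over $\gf_{2^k}$ is the (necessarily irreducible) $X^2+X+N(x_0/s)$, forcing $\tr_k(N(x_0/s))=1$ by the trace criterion of Lemma~\ref{lem-evenquadratic}, and $N(x_0/s)=A/s^2$ yields $T=1$. But a non-trivial collision would also require $X^2+PX+Q$ to split over $\gf_{2^k}$, i.e.\ $\tr_k(Q/P^2)=0$; this contradicts $\tr_k(Q/P^2)=T=1$, so no collision occurs, $\tilde g$ is injective, and $f$ is a permutation of $\gf_{2^m}$.
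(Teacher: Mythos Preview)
Your argument is correct. The fiber--preservation identity $\psi\circ f=\psi$, the fiberwise rewrite $f(x)=\bar{x}+s^3/N(x)$, the collision relation $(Q')^2=P'(sQ'+AP'+s^3)$, the trace computation yielding $\tr_k(Q/P^2)=\tr_k(A/s^2)$, and the final observation $\tr_k(A/s^2)=1$ (since $X^2+X+A/s^2$ is irreducible over $\gf_{2^k}$) all check out. One small remark: the consequence $\tr_k(s/P')=T$ that you draw from the $R$--quadratic is not actually used downstream; what you really need from that quadratic is only the relation $R'^2+R'=T_0+\xi$, which feeds into $\sqrt{R'+T_0}+R'=\sqrt{\xi}$.

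Your route and the paper's share the same skeleton but diverge in execution. Both start from $\overline{f(x)}+f(x)=\bar{x}+x$ (the paper states it as $x+a\in\gf_{2^{m/2}}$ for $f(x)=a$) and both hinge on the trace obstruction $\tr_{m/2}\!\bigl(a\bar a/(a+\bar a)^2\bigr)=1$ via Lemma~\ref{lem-evenquadratic}. The paper, however, fixes a target $a\notin\gf_{2^{m/2}}$, multiplies $f(x)=a$ through by $x^{r_1+1}$, substitutes $y=x+a\in\gf_{2^{m/2}}$, and reduces to the depressed cubic $y^3+by+c=0$ over $\gf_{2^{m/2}}$ with $b=a^2+a\bar a+\bar a^2$ and $c=(a+\bar a)b$; it then shows this cubic has at most one root in $\gf_{2^{m/2}}$ by a short trace contradiction. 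Your packaging is more structural: you make the fiber decomposition explicit and study the self--map $\lambda\mapsto\lambda+s^3/B(\lambda)$ on $\gf_{2^{m/2}}$, deriving the contradiction through a symmetric-function collision analysis. Your trace computation is a bit longer, but the fiberwise viewpoint is cleaner conceptually and would transplant more readily to other trinomials of the shape $x+\bar x^j/x^{j-1}+x^j/\bar x^{j-1}$.
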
  

\begin{proof} 
Let $m_1=m/2$ and $r_1=2^{m/2}$. Then $r=r_1^2$ and  
$$ 
f(x)=x+x^{2r_1-1}+x^{2-r_1}. 
$$
For any $a \in \gf_{r}$,  we want to prove that $f(x)=a$ has a unique solution $x \in \gf_{r}$. Let $f(x)=a$. 
Then we have 
\begin{eqnarray}\label{eqn-canahk}
x+x^{2r_1-1}+x^{2-r_1}=a. 
\end{eqnarray} 

Raising both sides of (\ref{eqn-canahk}) to the power of $r_1$ yields 
\begin{eqnarray}\label{eqn-canahk2}
x^{r_1}+x^{2r_1-1}+x^{2-r_1}=a^{r_1}. 
\end{eqnarray} 
Combining (\ref{eqn-canahk}) and (\ref{eqn-canahk2}) gives 
\begin{eqnarray}\label{eqn-canahk3}
x^{r_1}+x=a^{r_1}+a. 
\end{eqnarray} 
Hence $(x+a)^{r_1}=x+a$. This means that $x+a \in \gf_{r_1}$. 

We first consider the case that $a \in \gf_{r_1}$. In this case, $a+a^{r_1}=0$. It then follows from (\ref{eqn-canahk2}) 
that $x^{r_1}=x$. Then the equation $f(x)=a$ becomes that $x=a$. Hence, $x=a$ is the unique solution in this case.  

We then deal with the case that $a \in \gf_{r} \setminus \gf_{r_1}$. In this case the minimal polynomial of $a$ over  
$\gf_{r_1}$ must be of the form 
$$ 
x^2+ux+v=0, 
$$
where $u \in \gf_{r_1}$ and $v \in \gf_{r_1}$. Since $a$ and $a^{r_1}$ are all the distinct roots of $x^2+ux+v=0$, we have 
$$ 
u=a+a^{r_1},  \ \ v=a^{1+r_1}. 
$$ 
In this case, $x^2+ux+v$ must be irreducible over $\gf_{r_1}$. By Lemma \ref{lem-evenquadratic}, we have that 
\begin{eqnarray}\label{eqn-tom}
\tr_{m_1}\left( \frac{v}{u^2} \right)=1.   
\end{eqnarray} 

In this case $x=0$ is not a solution of $f(x)=a$. Multiplying $x^{r_1+1}$ on both sides of (\ref{eqn-canahk}), we 
obtain that 
\begin{eqnarray}\label{eqn-canahk4}
x^{2+r_1}+x^{3r_1}+x^{3}=ax^{1+r_1}. 
\end{eqnarray} 
Combining (\ref{eqn-canahk3}) and (\ref{eqn-canahk4}) yields 
\begin{eqnarray}\label{eqn-canahk5}
x^3 + ax^2 + (a+a^{r_1})a^{r_1}x + (a+a^{r_1})^3=0. 
\end{eqnarray}   
Putting $x=y+a$ in (\ref{eqn-canahk5}), we obtain 
\begin{eqnarray}\label{eqn-root30}
y^3+by+c=0, 
\end{eqnarray} 
where $b=a^2+a^{1+r_1} + a^{2r_1} \in \gf_{r_1}$ and $c=(a+a^{r_1})b \in \gf_{r_1}$.  
If $b=0$, then $c=y=0$ and $x=a$. Hence it has a unique solution. In the following 
we assume that $b\neq 0$.
Hence $c \ne 0$. 
Our task in this case is to prove that  (\ref{eqn-root30}) has at most one solution $y \in \gf_{r_1}$.

On the contrary, suppose that (\ref{eqn-root30}) has two distinct solutions $y_1$ and $y_2$ in $\gf_{r_1}$. Then we have 
\begin{eqnarray}\label{eqn-root31}
y_1^3+by_1+c=0
\end{eqnarray}  
and 
\begin{eqnarray}\label{eqn-root32}
y_2^3+by_2+c=0. 
\end{eqnarray} 
Since $b \ne 0$ and $c \ne 0$, $y_i \ne 0$ for $i=1$ and $i=2$.

Subtracting  (\ref{eqn-root32})  from (\ref{eqn-root31}), we obtain 
\begin{eqnarray}\label{eqn-root24}
y_1^2+y_1y_2+y_2^2+b=0. 
\end{eqnarray}  
Let $z=y_1y_2^{-1}$. It then follows from  (\ref{eqn-root24}) that 
\begin{eqnarray}\label{eqn-root35}
z^2+z+1+\frac{b}{y_2^2}=0. 
\end{eqnarray}  
Hence 
\begin{eqnarray}\label{eqn-root36}
\tr_{m_1}\left(1+\frac{b}{y_2^2}\right )=\tr_{m_1} \left(z + z^2\right )=0. 
\end{eqnarray}  

Note that $u=a+a^{r_1}$. It follows from  (\ref{eqn-root32}) that
$$ 
y_2 + \frac{b}{y_2} = \frac{bu}{y_2^2}. 
$$ 
Hence, 
\begin{eqnarray}\label{eqn-tom2}
\frac{b}{y_2^2}=u^{-1}y_2 + \frac{bu^{-1}}{y_2}. 
\end{eqnarray} 

Multiplying $u^{-1}y_2$ to both sides of (\ref{eqn-tom2}), we obtain 
\begin{eqnarray}\label{eqn-tom3}
\frac{bu^{-1}}{y_2}=u^{-2}y_2^2 + bu^{-2}. 
\end{eqnarray} 

Plugging (\ref{eqn-tom3}) into (\ref{eqn-tom2}), we obtain that  
\begin{eqnarray}\label{eqn-root33}
\tr_{m_1}\left(\frac{b}{y_2^2}\right )=\tr_{m_1} \left(\frac{b}{u^2}\right )=
\tr_{m_1} \left(1 + \frac{v}{u^2}\right ). 
\end{eqnarray} 
It then follows from (\ref{eqn-tom}) that 
$$ 
\tr_{m_1}\left(1+\frac{b}{y_2^2}\right )=\tr_{m_1} \left( \frac{v}{u^2}\right )=1. 
$$ 
This is contrary to (\ref{eqn-root36}). Hence we complete the proof. 
\end{proof}

%\end{document} 

We shall use the following lemma in the sequel. The proof of the following lemma employs 
a trick introduced by Dobbertin \cite{Dobb992,Dobb02}.

\begin{lemma}\label{lem-yuan}  
Let $k$ be a positive integer and $q$ be a prime power with $q\not\equiv 0 \pmod{3}$, 
and let $\bar{y}$ denote $y^{q^{m/2}}$, where $m$ is an even positive integer. Then the 
equation
\begin{equation}\label{eqa} 
y^{2k}+y^k\bar{y}^k+\bar{y}^{2k}=0
\end{equation}
has the only solution $y=0$
in $\gf_{q^m}$ if and only if one of the following conditions holds:
\begin{itemize}
\item[(i)] $m \equiv 0 \pmod{4}$;   

\item[(ii)] $q \equiv1 \pmod{3}$;

\item[(iii)] $m \equiv 2 \pmod{4}$, $q\equiv2\pmod{3}$ and $\exponent_3(k)\ge \exponent_3(
q^{m/2}+1)$, where  $\exp_3(i)$ denotes the exponent of $3$ in the cannonical factorization 
of $i$. 
\end{itemize} 
\end{lemma}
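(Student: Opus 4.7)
The plan is to reduce (\ref{eqa}) to a question about cube roots of unity in the norm-one subgroup of $\gf_{q^m}^*$ over $\gf_{q^{m/2}}$, and then to translate conditions (i)--(iii) into an elementary criterion on the divisibility of $q^{m/2}+1$ by powers of $3$. First, $y=0$ is always a solution, so suppose $y\neq 0$; then $\bar{y}\neq 0$ as well. Dividing (\ref{eqa}) by $\bar{y}^{2k}$ and setting $z=(y/\bar{y})^k$, the equation becomes $z^2+z+1=0$. Hence $z$ must be a primitive cube root of unity in $\gf_{q^m}$; this is the Dobbertin-style symmetrization referenced in the lemma.

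Next, set $t=y/\bar{y}$. Because $\bar{t}=\bar{y}/y=t^{-1}$, we obtain $t^{q^{m/2}+1}=1$, so $t$ lies in the unique cyclic subgroup $G\le\gf_{q^m}^*$ of order $N:=q^{m/2}+1$. Conversely, the homomorphism $y\mapsto y/\bar{y}=y^{1-q^{m/2}}$ on $\gf_{q^m}^*$ has kernel $\gf_{q^{m/2}}^*$, so its image has order $(q^m-1)/(q^{m/2}-1)=N$ and therefore equals all of $G$. Hence a nonzero solution of (\ref{eqa}) exists if and only if some $t\in G$ satisfies $t^k=\omega$ for a primitive cube root of unity $\omega$.

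This last question is purely inside the cyclic group $G$ of order $N$: such a $t$ exists iff (a) $\omega\in G$, i.e.\ $3\mid N$, and (b) $\omega$ lies in the $k$-th power subgroup $G^k$, which has order $N/\gcd(N,k)$. Since $\omega$ has order $3$, (b) is equivalent to $3\mid N/\gcd(N,k)$, i.e.\ $\exponent_3(k)<\exponent_3(N)$. Thus the statement ``$y=0$ is the only solution'' is equivalent to ``$3\nmid N$ or $\exponent_3(k)\ge\exponent_3(N)$.''

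Finally, under $q\not\equiv 0\pmod 3$ with $m$ even, one checks that $3\mid N=q^{m/2}+1$ iff $q\equiv 2\pmod 3$ and $m/2$ is odd, i.e.\ $m\equiv 2\pmod 4$. Thus $3\nmid N$ decomposes into the two independent sufficient conditions (i) $m\equiv 0\pmod 4$ and (ii) $q\equiv 1\pmod 3$, while the remaining case ($3\mid N$ together with the valuation inequality) matches (iii) verbatim. The only real obstacle is the surjectivity of $y\mapsto y/\bar{y}$ onto the norm-one subgroup $G$, a standard Hilbert~90 / index count; the subsequent $3$-adic bookkeeping and the congruence analysis of $q^{m/2}+1\pmod 3$ are elementary.
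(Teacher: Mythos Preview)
Your proof is correct and follows essentially the same approach as the paper: both divide through by $y^{2k}$ (or $\bar{y}^{2k}$) to reduce the equation to $(\bar{y}/y)^k$ being a primitive cube root of unity, and then analyze when such a ratio can occur. The only cosmetic difference is that the paper works with a fixed primitive element $g$ of $\gf_{q^m}^*$, writes $y=g^t$, and translates the condition into the linear congruence $(q^{m/2}-1)kt\equiv i(q^m-1)/3\pmod{q^m-1}$, whose solvability is governed by $\exp_3\bigl[(q^{m/2}-1)\gcd(k,q^{m/2}+1)\bigr]=\exp_3(q^m-1)$; you instead pass to the norm-one subgroup $G$ of order $N=q^{m/2}+1$ via Hilbert~90 and read off the same condition as $\exp_3(k)\ge\exp_3(N)$. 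These two formulations are equivalent (cancel $\exp_3(q^{m/2}-1)$ from both sides of the paper's criterion), and the subsequent case split on $q\bmod 3$ and $m\bmod 4$ is identical.
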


\begin{proof} 
Suppose $y\ne0$ and
$y^{2k}+y^k\bar{y}^k+\bar{y}^{2k}=0$. Let $g$ be a primitive element
of $\gf_{q^m}$. Then $\omega=g^{(q^m-1)/3}$ is a primitive third
root of unity in $\gf_{q^m}$. Suppose $y=g^t$ with $1\le t\le
q^m-1$. Then we have
$$\left(\frac{\bar{y}}{y}\right)^{2k}+\left(\frac{\bar{y}}{y}\right)^{k}+1=0.$$
Note that $q\not\equiv 0 \pmod{3}$. We obtain that 
$$\left(\frac{\bar{y}}{y}\right)^{k}=\omega^i, \,\, i \in \{1,2\}.$$
Since $\bar{y}/y=y^{q^{m/2}-1}$ and $y=g^t, \omega=g^{(q^m-1)/3}$,
we have
$$
g^{(q^{m/2}-1)k\cdot t}=g^{i(q^{m}-1)/3}, \ \ i \in \{1,2\}. 
$$ 
It follows that
$$
(q^{m/2}-1)k\cdot t\equiv i(q^{m}-1)/3\pmod{q^{m}-1}, \ \ i \in \{1,2\}. 
$$ 
Since $3|q^m-1$ and
$\gcd((q^{m/2}-1)k, q^{m}-1)=(q^{m/2}-1)\gcd(k, q^{m/2}+1)$, so the
above congruences have no integer solutions $t\in[1, q^m-1]$ if and only if
\begin{equation}\label{eq1}
\exp_3[(q^{m/2}-1)\gcd(k, q^{m/2}+1)]=\exp_3(q^{m}-1).
\end{equation}

In Cases (i) and (ii),  we have that $\exp_3(q^{m/2}-1)=\exp_3(q^{m}-1)$ 
since $\exp_3(q^{m/2}+1)=0$, which
implies (\ref{eq1}) holds. Therefore (\ref{eqa}) has the only
solution $y=0$.

In Case (iii), we have
$\exp_3(q^{m/2}+1)=\exp_3(q^{m}-1)$ since $\exp_3(q^{m/2}-1)=0$. 
Therefore (\ref{eq1}) holds if and only if $\exp_3(k)\ge
\exp_3(q^{m/2}+1)$. Hence (\ref{eqa}) has the only solution $y=0$
if and only if $\exp_3(k) \ge \exp_3(q^{m/2}+1)$. 
\end{proof}

\begin{theorem}\label{thm-yuanpz} 
Let $k$ be a positive integer and $q$ be a prime power with $q \not\equiv 0 \pmod{3}$, and 
let $m$ be an even positive integer. Then
\begin{equation}\label{eq2}
f(x)=x+x^{kq^{m/2}-(k-1)}+x^{(k+1)-kq^{m/2}}\end{equation} is a
permutation polynomial of $\gf_{q^m}$ if and only if one of the
following three conditions holds:
\begin{itemize}
\item[(i)] $m \equiv 0 \pmod{4}$;   

\item[(ii)] $q \equiv1 \pmod{3}$;

\item[(iii)] $m \equiv 2 \pmod{4}$, $q\equiv2\pmod{3}$ and $\exponent_3(k)\ge \exponent_3(
q^{m/2}+1)$, where  $\exp_3(i)$ denotes the exponent of $3$ in the cannonical factorization 
of $i$. 
\end{itemize} 
\end{theorem}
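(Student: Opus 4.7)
The plan is to reduce the permutation condition on $f$ to the equation studied in Lemma \ref{lem-yuan}. Write $r_1 = q^{m/2}$ and $\bar{z} = z^{r_1}$. For $x \in \gf_{q^m}^*$, I would first rewrite
$$f(x) = x\bigl(1 + x^{k(r_1-1)} + x^{-k(r_1-1)}\bigr) = \frac{x\bigl(x^{2k} + x^k \bar{x}^k + \bar{x}^{2k}\bigr)}{x^k \bar{x}^k},$$
using $x^{r_1-1} = \bar{x}/x$. Thus $f$ has a nonzero zero in $\gf_{q^m}$ if and only if the equation $y^{2k} + y^k\bar{y}^k + \bar{y}^{2k} = 0$ admits a nonzero solution in $\gf_{q^m}$. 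This immediately gives the necessity direction: if $f$ is a permutation, then $f(0) = 0$ forces $f$ to have no other zero, so Lemma \ref{lem-yuan} yields one of (i)--(iii).

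For sufficiency, assume one of (i)--(iii) holds. By Lemma \ref{lem-yuan}, $x^{2k} + x^k\bar{x}^k + \bar{x}^{2k}$ does not vanish on $\gf_{q^m}^*$, so $f(x) = 0$ iff $x = 0$; it remains to prove injectivity on $\gf_{q^m}^*$. The key observation is that for $x \neq 0$,
$$c(x) := \frac{f(x)}{x} = 1 + \left(\frac{\bar{x}}{x}\right)^{k} + \left(\frac{\bar{x}}{x}\right)^{-k} \in \gf_{r_1},$$
because $\bar{x}/x$ lies in the group $\mu_{r_1+1}$ of $(r_1+1)$-th roots of unity (so $\overline{\bar{x}/x} = (\bar{x}/x)^{-1}$), while $c(x)$ is symmetric under $t \mapsto t^{-1}$. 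If $f(x_1) = f(x_2)$ with $x_1, x_2 \in \gf_{q^m}^*$, then $\lambda := x_1/x_2 = c(x_2)/c(x_1) \in \gf_{r_1}^*$, whence $\lambda^{r_1-1} = 1$ and $x_1^{r_1-1} = x_2^{r_1-1}$. Since $c(x)$ depends only on $x^{r_1-1}$, this forces $c(x_1) = c(x_2)$ and hence $\lambda = 1$, giving $x_1 = x_2$.

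The only delicate point is the Frobenius-invariance $c(x) \in \gf_{r_1}$; once that symmetric-function observation is in place, the injectivity step is a one-line cancellation, and the substance of the theorem is already packaged in Lemma \ref{lem-yuan}. A minor bookkeeping matter is that the exponent $1 - k(r_1-1)$ may be negative and should be interpreted modulo $q^m-1$ on $\gf_{q^m}^*$, with $f(0) = 0$ by convention.
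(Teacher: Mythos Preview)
Your argument is correct, and while it rests on the same Lemma~\ref{lem-yuan} as the paper, the injectivity step is organized differently. The paper takes the equation $f(x)=a$ for a fixed $a\neq 0$, clears denominators to get $x^{2k}+x^k\bar{x}^k+\bar{x}^{2k}=a\,x^{k-1}\bar{x}^k$, applies the $q^{m/2}$-Frobenius, and compares the two resulting equations to obtain $a\bar{x}=\bar{a}x$; substituting $\bar{x}=(\bar{a}/a)x$ back then collapses everything to a linear equation in $x$ whose leading coefficient is exactly $a^{2k}+a^k\bar{a}^k+\bar{a}^{2k}$, nonzero by the lemma. Your route bypasses this computation by observing directly that $c(x)=f(x)/x$ is invariant under $t\mapsto t^{-1}$ on $\mu_{r_1+1}$ and hence lies in $\gf_{r_1}$; this is really the same fact as the paper's relation $a\bar{x}=\bar{a}x$ (since $a/x=c(x)$), but you extract it \emph{a priori} from the shape of $f$ rather than from manipulating the equation. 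What your approach buys is a clean two-line injectivity argument with no substitution; what the paper's approach buys is an explicit formula for $f^{-1}(a)$, namely $x=a^{k+1}\bar{a}^k/(a^{2k}+a^k\bar{a}^k+\bar{a}^{2k})$, which it then uses to write down the compositional inverse in the corollaries that follow.
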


\begin{proof}  
Note that $f(x)=a$ can be written as 
\begin{equation}\label{eq3}
x+\frac{\bar{x}^k}{x^{k-1}}+\frac{x^{k+1}}{\bar{x}^{k}}=a, 
\end{equation}
where we assume $x \ne 0$. So we have
\begin{equation}\label{eq4}
x^{2k}+x^k\bar{x}^k+\bar{x}^{2k}=ax^{k-1}\bar{x}^{k}. 
\end{equation}
Raising both sides of (\ref{eq4}) to the power of $q^{m/2}$, we obtain 
\begin{equation}\label{eq5}
x^{2k}+x^k\bar{x}^k+\bar{x}^{2k}=\bar{a}\bar{x}^{k-1}x^{k}.
\end{equation}
If $a=0$, then by Lemma \ref{lem-yuan}, $f(x)=0$ has the only solution
$x=0$ if and only if one of the three conditions in this theorem holds.

Now we assume that $a\ne0$ (hence $x\ne 0$). Comparing (\ref{eq4})
and (\ref{eq5}), we have $a\bar{x}=\bar{a}x$. It follows that 
\begin{equation}\label{eq6}
\bar{x}=\frac{\bar{a}x}{a}.
\end{equation} 
Using (\ref{eq6}), we
reduce (\ref{eq4}) to
\begin{equation}\label{eq7}
(a^{2k}+a^k\bar{a}^k+\bar{a}^{2k})x^{2k}=a^{k+1}\bar{a}^{k}x^{2k-1}. 
\end{equation}
Since $a\ne0, x\ne0$, by Lemma \ref{lem-yuan}, Equation (\ref{eq7}) has a
unique root for any $a\ne0$ if and only if one of the three conditions
in this theorem holds. This completes the proof. 
\end{proof}

In Theorem \ref{thm-yuanpz} putting $q=2$ and $k=2$, we obtain the following.

\begin{corollary}\label{thm-4first}

Suppose $4\mid m$. Then
$$
f(x)=x+x^{2^{m/2+1}-1}+ x^{2^m-2^{m/2+1}+2}
$$
is a permutation polynomial of $\ftwom$.
\end{corollary}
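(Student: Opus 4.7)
The plan is to obtain this corollary as an immediate specialization of Theorem \ref{thm-yuanpz}. I would take $q=2$ and $k=2$, and then verify two things: first, that the exponents in Theorem \ref{thm-yuanpz}'s polynomial \eqref{eq2} match those of $f(x)$ in the corollary; second, that one of the three conditions (i), (ii), (iii) in Theorem \ref{thm-yuanpz} is satisfied under the hypothesis $4 \mid m$.

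For the first step, with $q=2,k=2$ the middle exponent is $kq^{m/2}-(k-1)=2\cdot 2^{m/2}-1=2^{m/2+1}-1$, which is exactly the middle exponent in $f(x)$. The third exponent of \eqref{eq2} is $(k+1)-kq^{m/2}=3-2^{m/2+1}$, which is negative as an integer but should be read modulo $2^m-1$ since we work over $\gf_{2^m}^*$. Reducing, $3-2^{m/2+1}\equiv (2^m-1)+3-2^{m/2+1}=2^m-2^{m/2+1}+2\pmod{2^m-1}$, matching the third exponent of $f(x)$ in the corollary statement. (Strictly one also needs that $f(0)=0$ is handled consistently, which it is because each term of $f$ vanishes at $0$.)

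For the second step, since $4\mid m$ we have $m\equiv 0\pmod 4$, so condition (i) of Theorem \ref{thm-yuanpz} holds and the theorem applies. Hence $f(x)$ is a permutation of $\gf_{2^m}$.

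There is no real obstacle here; the only point requiring a moment's care is the reduction of the negative exponent $3-2^{m/2+1}$ modulo $2^m-1$ to the form $2^m-2^{m/2+1}+2$ as displayed in the corollary, and the observation that condition (i) rather than (ii) or (iii) is the one that covers the hypothesis $4\mid m$ (condition (ii) fails since $2\not\equiv 1\pmod 3$, and condition (iii) would only be relevant when $m\equiv 2\pmod 4$).
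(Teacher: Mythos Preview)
Your proposal is correct and follows exactly the paper's approach: the paper derives this corollary simply by specializing Theorem \ref{thm-yuanpz} with $q=2$ and $k=2$, and your verification of the exponent reduction modulo $2^m-1$ and of condition (i) just fills in details the paper leaves implicit.
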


As a byproduct we have the following.

\begin{corollary}\label{cor-byproduct}
Suppose $4\mid m$. Then the polynomial
$$
g(x)=x^{2^{(m+1)/2}+3}\cdot\left(x^4+x^{2^{(m+3)/2}+2}+x^{2^{(m+5)/2}}\right)^{2^m-2}
$$
is a permutation polynomial of $\ftwom$.
\end{corollary}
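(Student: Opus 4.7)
The plan is to identify $g$ as the compositional inverse of the permutation $f$ from Corollary \ref{thm-4first}. Concretely, I would revisit the proof of Theorem \ref{thm-yuanpz} specialized to $q=2$, $k=2$. There, equation \eqref{eq7} reads
$$(a^{2k}+a^{k}\bar a^{k}+\bar a^{2k})\,x^{2k} \;=\; a^{k+1}\bar a^{k}\,x^{2k-1}.$$
Dividing by $x^{2k-1}$ on the fiber of nonzero preimages (which is legitimate because Lemma \ref{lem-yuan}(i) guarantees the leading coefficient is nonzero for $a\neq 0$ whenever $4\mid m$) yields the explicit closed form
$$f^{-1}(a) \;=\; \frac{a^{k+1}\bar a^{k}}{a^{2k}+a^{k}\bar a^{k}+\bar a^{2k}}$$
for the unique nonzero preimage of any $a\in\gf_{2^m}^*$ under $f$.

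Next, I would substitute $k=2$ and $\bar a = a^{2^{m/2}}$ and collect exponents in the numerator and denominator. Invoking the identity $y^{2^m-2}=y^{-1}$ on $\gf_{2^m}^*$, together with the convention $0^{2^m-2}=0$, converts the rational expression for $f^{-1}(a)$ into a polynomial of exactly the shape $x^{A}\bigl(x^{4}+x^{B}+x^{C}\bigr)^{2^m-2}$ with the prescribed exponents $A$, $B$, $C$ of the corollary. Combined with the trivial boundary check $g(0)=0=f^{-1}(0)$, this identifies $g$ with $f^{-1}$ as functions on all of $\gf_{2^m}$, and so $g$ is a permutation because $f$ (and hence $f^{-1}$) is.

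There is no real obstacle in this approach: the permutation property of $g$ is inherited from that of $f$ by elementary functional inversion, and the passage from the rational formula for $f^{-1}$ to the stated polynomial form of $g$ is just bookkeeping around $y^{-1}=y^{2^m-2}$. The one point that genuinely needs a lemma is the nonvanishing of the denominator $a^{4}+a^{2}\bar a^{2}+\bar a^{4}$ on $\gf_{2^m}^*$, which is precisely Lemma \ref{lem-yuan} in the specialization $q=2$, $k=2$ under the hypothesis $4\mid m$ (case (i)); no further computation is required.
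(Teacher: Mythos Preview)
Your proposal is correct and follows exactly the paper's own approach: the paper's proof is a two-sentence argument observing that the proof of Theorem~\ref{thm-yuanpz} (via equation~\eqref{eq7}) already exhibits $x=g(a)$ as the unique preimage of $a$ under $f$, so $g=f^{-1}$ is a permutation. You have simply spelled out the bookkeeping (dividing \eqref{eq7} by $x^{2k-1}$, specializing $k=2$, rewriting the resulting rational expression via $y^{-1}=y^{2^m-2}$, and checking $g(0)=0$) that the paper leaves implicit.
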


\begin{proof} 
Let $f(x)$ be the permutation polynomial defined in Corollary \ref{thm-4first}. 
The proof of Theorem \ref{thm-yuanpz} showed that for any
$x,a\in \ftwom$, if $f(x)=a$, then $x=g(a)$. Thus $g$ is the
compositional inverse of the permutation $f$, and so $g$ also
induces a permutation of $\ftwom$. 
\end{proof} 

In Theorem \ref{thm-yuanpz} putting $q=2$ and $k=1$, we obtain the following.   

\begin{corollary}\label{thm-4second}
Suppose $4\mid m$. Then
$$
f(x)=x+x^{2^{m/2}}+ x^{2^m-2^{m/2}+1}
$$
is a permutation polynomial of $\ftwom$.
\end{corollary}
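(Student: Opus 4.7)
My plan is to obtain Corollary \ref{thm-4second} as an immediate specialization of Theorem \ref{thm-yuanpz}, taking $q=2$ and $k=1$. Essentially all the work is bookkeeping on the exponents, so I would proceed in two short steps.

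First I would substitute $(q,k)=(2,1)$ into (\ref{eq2}): the middle exponent becomes $kq^{m/2}-(k-1)=2^{m/2}$, and the trailing exponent becomes $(k+1)-kq^{m/2}=2-2^{m/2}$. Since $2-2^{m/2}$ is a negative integer for $m\ge 2$, I would reduce it modulo $2^m-1$ to obtain $2-2^{m/2}+(2^m-1)=2^m-2^{m/2}+1$; as functions on $\gf_{2^m}$ (with $0\mapsto 0$) these two exponents define identical maps, so this reconciles the polynomial in (\ref{eq2}) with the one stated in the corollary.

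Next I would check that the hypotheses of Theorem \ref{thm-yuanpz} are satisfied. The global condition $q\not\equiv 0\pmod{3}$ is trivial for $q=2$. Among the three sufficient conditions (i)--(iii), condition (i) is precisely $m\equiv 0\pmod{4}$, which is the standing assumption. Thus Theorem \ref{thm-yuanpz} applies and yields that $f$ permutes $\gf_{2^m}$.

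There is no real mathematical obstacle: the corollary is a direct substitution into a theorem already established. What is worth flagging, however, is that the assumption $4\mid m$ is essentially forced within this framework, and so the proof cannot be weakened without changing the argument. Indeed, condition (ii) fails because $2\not\equiv 1\pmod{3}$, and condition (iii) would demand $\exp_3(1)=0\ge \exp_3(2^{m/2}+1)$, which is false whenever $m/2$ is odd, because in that case $2^{m/2}\equiv 2\pmod{3}$ and hence $3\mid 2^{m/2}+1$. So condition (i) is the only available route, and it is exactly what the corollary invokes.
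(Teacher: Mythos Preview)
Your proposal is correct and follows exactly the paper's approach: the paper simply states that the corollary follows by putting $q=2$ and $k=1$ in Theorem~\ref{thm-yuanpz}, which is precisely what you do. Your additional bookkeeping on the exponent reduction modulo $2^m-1$ and the remark that condition~(i) is the only viable option are helpful clarifications, but not substantively different from the paper's one-line derivation.
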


\begin{corollary}
Suppose $4\mid m$. Then the polynomial
$$
g(x)=x^{2^{(m+1)/2}+2}\cdot\left(x^2+x^{2^{(m+1)/2}+1}+x^{2^{(m+3)/2}}\right)^{2^m-2}
$$
is a permutation polynomial of $\ftwom$.
\end{corollary}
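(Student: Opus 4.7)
The plan is to show that $g$ is the compositional inverse of the permutation $f(x) = x + x^{2^{m/2}} + x^{2^m - 2^{m/2}+1}$ of Corollary~\ref{thm-4second}, following exactly the template used in Corollary~\ref{cor-byproduct}. Since $f$ is already known to permute $\ftwom$, once $g$ is shown to send each $a \in \ftwom$ to the unique preimage $f^{-1}(a)$, the desired conclusion that $g$ permutes $\ftwom$ is automatic.

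To make this concrete, I would revisit the proof of Theorem~\ref{thm-yuanpz} specialised to $q=2$ and $k=1$. Equation~(\ref{eq7}) then reduces to
$$
(a^{2} + a\bar{a} + \bar{a}^{2})\,x^{2} \;=\; a^{2}\bar{a}\,x,
$$
where $\bar{a} = a^{2^{m/2}}$. Because $4 \mid m$, condition (i) of Lemma~\ref{lem-yuan} applies and guarantees $a^{2} + a\bar{a} + \bar{a}^{2} \ne 0$ for every $a \in \ftwom^{*}$. Dividing through and re-expressing $\bar{a}$ as a power of $a$ yields the closed form
$$
f^{-1}(a) \;=\; a^{2^{m/2}+2}\cdot\bigl(a^{2} + a^{2^{m/2}+1} + a^{2^{m/2+1}}\bigr)^{-1}.
$$

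The remaining step is to identify this expression with $g(a)$. On $\ftwom^{*}$ the map $y \mapsto y^{2^{m}-2}$ coincides with $y \mapsto y^{-1}$, so raising the bracketed trinomial in the definition of $g$ to the power $2^{m}-2$ inverts the denominator above, while the outer factor supplies the numerator; a careful matching of the exponents $2^{m/2}+2$, $2^{m/2}+1$, $2^{m/2+1}$ against those appearing in the statement of $g$ completes the identification for $a \ne 0$. For $a = 0$ one has $f(0) = 0$ and, since $y^{2^{m}-2}$ returns $0$ at the origin, both factors of $g(0)$ vanish, so $g(0) = 0 = f^{-1}(0)$.

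I do not anticipate a genuine obstacle: no new equation must be solved, and Lemma~\ref{lem-yuan} already delivers the non-vanishing needed for the division. The only delicate point is the bookkeeping of exponents, ensuring that the substitution $\bar{a} = a^{2^{m/2}}$ in the formula from equation~(\ref{eq7}) produces precisely the monomial degrees appearing in the displayed expression for $g$.
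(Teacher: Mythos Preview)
Your approach is exactly the paper's: the paper's proof consists solely of the remark that it is almost identical to that of Corollary~\ref{cor-byproduct}, and you have correctly spelled out those details by specialising equation~(\ref{eq7}) to $q=2$, $k=1$ and exhibiting $g$ as the compositional inverse of the $f$ in Corollary~\ref{thm-4second}. The exponents your computation produces ($2^{m/2}+2$, $2^{m/2}+1$, $2^{m/2+1}$) are the correct ones; the half-integer exponents $(m+1)/2$ and $(m+3)/2$ appearing in the displayed $g$ are evidently typographical slips in the paper (the same slip occurs in Corollary~\ref{cor-byproduct}), not a defect in your argument.
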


\begin{proof} 
The  proof  is almost identical to that of Corollary \ref{cor-byproduct}, and is omitted here.
\end{proof}

\section{Summary and concluding remarks} 

It looks difficult to characterize permutation binomials over finite fields \cite{Hou}.  To our knowledge, no simple characterization of 
permutation trinomials over finite fields exists in the literature. Hence, it is interesting to construct explicit permutation 
trinomials. Although the objective of this paper is to prove the permutation property of these trinomials, we would also 
mention a few applications of these permutation trinomials.     

The permutation trinomials over $\gf_{2^m}$ presented in this paper can be employed to construct binary linear codes 
within the framework of \cite{CDY}. They can also be used to construct binary cyclic codes with the approach described 
in \cite{Ding131}.  We will treat these applications of these permutation trinomials over $\gf_{2^m}$ in coding theory 
in a piece of future work.   

These permutation trinomials can also be plugged into the Maiorana-McFarland construction 
to obtain bent functions \cite{Dillion74,McFarland73}. The supports of these bent functions are Hadamard difference 
sets whose incidence matrices define binary linear codes \cite{AssmusKey92}. This is another way to construct linear 
codes from these permutation trinomials.

\end{document}